 \newtheorem{thm}{Theorem}[section]
 \theoremstyle{definition}
 \newtheorem{defn}[thm]{Definition}
 \theoremstyle{remark}
 \numberwithin{equation}{section}
\newcommand{\Pin}{\mathop{\mathrm{Pin}}}
\newcommand{\Spin}{\mathop{\mathrm{Spin}}}
\newcommand{\Dic}{\mathop{\mathrm{Dic}}}
\def\bR{\mathbb{R}}
\def\bC{\mathbb{C}}
\def\bH{\mathbb{H}}
\begin{document}
	\bibliographystyle{plain} 
%-------------------------------------------------------------------------
% editorial commands: to be inserted by the editorial office
%
%\firstpage{1} \volume{228} \Copyrightyear{2004} \DOI{003-0001}
%
%
%\seriesextra{Just an add-on}
%\seriesextraline{This is the Concrete Title of this Book\br H.E. R and S.T.C. W, Eds.}
%
% for journals:
%
%\firstpage{1}
%\issuenumber{1}
%\Volumeandyear{1 (2004)}
%\Copyrightyear{2004}
%\DOI{003-xxxx-y}
%\Signet
%\commby{inhouse}
%\submitted{March 14, 2003}
%\received{March 16, 2000}
%\revised{June 1, 2000}
%\accepted{July 22, 2000}
%
%
%
%---------------------------------------------------------------------------
%Insert here the title, affiliations and abstract:
%

\title[From the Trinity $(A_3, B_3, H_3)$ to an $ADE$ correspondence]
 {From the Trinity $(A_3, B_3, H_3)$ to an $ADE$ correspondence}

%----------Author 1
\author[P-P Dechant]{Pierre-Philippe Dechant}

\address{%
Pro-Vice Chancellor's Office,\\York St John University, York YO31 7EX, United Kingdom,\\ {}\\
York Cross-disciplinary Centre for Systems Analysis,\\
University of York, Heslington YO10 5GE, United Kingdom\\{}\\
Department of Mathematics\\
University of York, Heslington YO10 5DD, United Kingdom}

\email{ppd22@cantab.net}

%----------classification, keywords, date
\subjclass{52B10, 52B11, 52B15, 15A66, 20F55, 17B22, 14E16}

\keywords{
Clifford algebras, 
Coxeter groups, 
root systems, 
Coxeter plane, 
exponents,
Lie algebras, 
Lie groups,
McKay correspondence,
ADE correspondence,
Trinity, 
finite groups,
pin group,
spinors,
degrees, 
Platonic solids}

\date{December 6, 2018}
%----------additions
\dedicatory{To the late Lady Isabel and Lord John Butterfield}
%%% ----------------------------------------------------------------------

%Vladimir Arnold conjectured an indirect connection between the Trinity of exceptional 4D root systems $(D_4, F_4, H_4)$ and the Trinity of symmetries of the Platonic solids $(A_3, B_3, H_3)$. In recent work we have constructed this link explicitly as special cases in a wider correspondence, which induces a 4D root system for each 3D root system. We show here that Arnold's original link also extends to these extra cases. Comparison with the McKay correspondence further motivates the inclusion of another infinite family of root systems, yielding the three ADE sets $(I_2(n), A_1\times I_2(n), A_3, B_3, H_3)$, $(I_2(n), I_2(n)\times I_2(n), D_4, F_4, H_4)$ and $(A_n, D_{n+2}, E_6, E_7, E_8)$, with explicit pairwise ADE correspondences between them.

\begin{abstract}
In this paper we present novel $ADE$ correspondences by combining an earlier induction theorem of ours with one of Arnold's observations concerning Trinities,  and the McKay correspondence.
We first extend Arnold's indirect link between the  Trinity of symmetries of the Platonic solids $(A_3, B_3, H_3)$ and the Trinity of exceptional 4D root systems $(D_4, F_4, H_4)$ to an explicit Clifford algebraic construction linking the two ADE sets of root systems $(I_2(n), A_1\times I_2(n), A_3, B_3, H_3)$ and $(I_2(n), I_2(n)\times I_2(n), D_4, F_4, H_4)$. 
The latter  are connected through the McKay correspondence with the ADE Lie algebras $(A_n, D_n, E_6, E_7, E_8)$. We show that there are also novel indirect as well as direct connections between these ADE root systems and the new ADE set of root systems $(I_2(n), A_1\times I_2(n), A_3, B_3, H_3)$, resulting in a web of three-way ADE correspondences between three ADE sets of root systems.

\end{abstract}
%%% ----------------------------------------------------------------------
\maketitle
%%% ----------------------------------------------------------------------
%\tableofcontents

\section{Introduction}
The normed division algebras -- the real numbers $\bR$, the complex numbers $\bC$ and the quaternions $\bH$ -- naturally form a unit of three: $(\bR, \bC, \bH)$. This straightforwardly extends to the associated projective spaces $(\mathbb{R}P^n, \mathbb{C}P^n, \mathbb{H}P^n)$. This includes the special case of the spheres  $(\mathbb{R}P^1=S^1, \mathbb{C}P^1=S^2, \mathbb{H}P^1=S^4)$ with their associated Hopf bundles $(S^1\rightarrow S^1, S^3\rightarrow S^2, S^7 \rightarrow S^4)$. Vladimir Arnold \cite{Arnold1999symplectization,Arnold2000AMS} noted that other objects in mathematics naturally form units of three, or that many problems over the reals have interesting complex and quaternionic generalisations. A prominent example are the Platonic solids (tetrahedron, octahedron and icosahedron) along with their symmetry groups: the rotations only are commonly denoted $(T, O, I)$ for obvious reasons and are also known as $(A_4, S_4, A_5)$ as symmetric and alternating groups; the reflection symmetry groups are denoted by $(A_3, B_3, H_3)$ in Coxeter notation (see Section \ref{sec_induct}). For the crystallographic root systems  (types $A$-$G$) this notation is the same as the usual Dynkin notation familiar from Lie algebras; but this classification also includes the non-crystallographic root systems (which do not have corresponding Lie algebras due to their non-crystallographic property): the symmetries of the regular polygons $I_2(n)$ and the exceptional root systems $H_3$ (icosahedral symmetry) and $H_4$. 

Arnold termed these triplets `Trinities'; and despite the somewhat esoteric nature and name, he used these as guiding principles for his own work: he would conjecture that there should be a complexified or quaternionified version of a theorem applying to the real numbers, which `preserves some of the essential structure', and based on this intuition he would seek to prove it. Other Trinities include the root systems $(D_4, F_4, H_4)$ in four dimensions, the Lie algebras of $E$-type $(E_6, E_7, E_8)$, or the structure of singularities. Thus, these Trinities are a useful guiding principle, both from the perspective of finding the complexification and quaternionification of a real theory or theorem, and also as regards the connections between different Trinities, which are often in quite different mathematical areas.  

This intuition must of course then be backed up by concrete proofs and constructions. In some of the above mentioned cases the connections are more obvious than others. For instance, it is easy to see the links between the Trinities clustered around  $(\bR, \bC, \bH)$. Likewise, there is a web of obvious connections between different Trinities related to the Platonic Solids such as their symmetry triples $((2,3,3), (2,3,4), (2,3,5))$ denoting the orders of the three types of rotations in each polyhedral group (e.g. the icosahedral group has 2-, 3- and 5-fold rotations),  the structure of singularities, the symmetry groups $(A_3, B_3, H_3)$, the binary polyhedral groups $(2T, 2O, 2I)$, the number of roots $(12, 18, 30)$ of the 3D root systems $(A_3, B_3, H_3)$ or the order of the binary polyhedral groups $(24, 48, 120)$. However, it is less obvious that $(A_3, B_3, H_3)$ and $(E_6, E_7, E_8)$ should be related: but the triples $((2,3,3), (2,3,4), (2,3,5))$ of orders of the three types of rotations in the polyhedral groups are also exactly the lengths of the three legs in the diagrams of $(E_6, E_7, E_8)$. Whilst this is strikingly obvious and intriguing on one level, it is not at all clear how this connection comes about concretely. 

Another connection between different areas of mathematics is the McKay correspondence \cite{Mckay1980graphs}, named after John McKay who also first noticed Moonshine \cite{gannon2006moonshine,eguchi2011notes}. It relates the binary polyhedral groups $(2T, 2O, 2I)$ and the $E$-type Lie algebras $(E_6, E_7, E_8)$ in two ways: firstly, the graphs depicting the tensor product structure of the irreducible representations of the binary polyhedral groups are exactly the graphs of the affine extensions of $(E_6, E_7, E_8)$. In this construction, each irreducible representation corresponds to a node in the diagram. Each of them is tensored with the two-dimensional spinorial irreducible representation. The rule for connecting nodes in the diagram is given by which irreducible representations occur in these tensor products. Secondly,  the sum of the dimensions of the irreducible representations (not squared) is given by $(12, 18, 30)$, which is equal to the Coxeter number $h$ (the order of the Coxeter element)  of the $E$-type Lie algebras $(E_6, E_7, E_8)$ (the sum of the dimensions squared of the irreducible representations is of course the order of the respective groups, $(24, 48, 120)$).

However, in fact this McKay correspondence is wider, and contains all subgroups of $\Spin(3)$ (which is isomorphic to the unit quaternions): the binary cyclic and dicyclic groups along with the binary tetrahedral, octahedral and icoshedral groups. The corresponding Lie algebras are those of type $A_n$ and $D_n$, making this a correspondence for $ADE$ Lie algebras, rather than just the Trinity part. 

Recently, we have shown a new explicit construction between $(A_3, B_3, H_3)$ and $(D_4, F_4, H_4)$ \cite{Dechant2012Induction, Dechant2012CoxGA}. Arnold gives his link in \cite{Arnold1999symplectization,Arnold2000AMS}. It is rather cumbersome, and involves numerous intermediate steps. Arnold himself says: ``\emph{Few years ago I had discovered an operation transforming the last trinity [$(A_3, B_3, H_3)$] into another trinity of Coxeter groups $(D_4, F_4, H_4)$. I shall describe this rather unexpected operation later.}'' If one considers the Weyl chamber Springer cone decomposition of $(A_3, B_3, H_3)$, one finds that the orders 
$$24=2(1+3+3+5)$$
$$48=2(1+5+7+11)$$
$$120=2(1+11+19+29)$$
of the groups decompose according to the number of Weyl chambers in each Springer cone with coefficients that are one less than the quasihomogeneous weights of $(D_4, F_4, H_4)$, which are $(2, 4, 4, 6)$, $(2, 6, 8, 12)$ and  $(2, 12, 20, 30)$, respectively. 

Firstly, this fails to notice that the numbers appearing in the decomposition $(1,3,3,5)$, $(1,5,7,11)$ and $(1,11,19,29)$ are just the `exponents' $m_i$ of $(D_4, F_4, H_4)$, i.e. they are related to the complex eigenvalues $\exp(2\pi i m_i/h)$ of the Coxeter element of $(D_4, F_4, H_4)$. These are well-known to be related to the degrees $d_i$ of polynomial invariants $(2, 4, 4, 6)$, $(2, 6, 8, 12)$, $(2, 12, 20, 30)$ of these groups $(D_4, F_4, H_4)$  via $d_i=m_i+1$ \cite{Humphreys1990Coxeter}. Therefore, the more direct connection is actually via the Springer cone decomposition and the exponents. 

Our recent Clifford algebraic construction \cite{Dechant2012Induction} is much more immediate and general, and rather less surprising too; furthermore, like the McKay correspondence it is a wider correspondence that encompasses the Trinity but furthermore includes countably infinite families. This construction is a statement between root systems in 3D and 4D. The theorem states that one can start with any 3D root system and  construct a corresponding 4D root system from it. So in a philosophical way it proves that these 4D root systems would have to exist; due to the accidentalness of this $3D-4D$ connection they have certain unusual properties not shared by other root systems in general. However, the list of root systems is very limited and has of course been well known for a long time. Therefore one can simply calculate the correspondence for each 3D root system explicitly. The induced 4D root systems have to be from the limited list of known 4D root systems: for the Trinity of irreducible 3D root systems $(A_3, B_3, H_3)$ one thus gets the Trinity of exceptional 4D root systems  $(D_4, F_4, H_4)$. We count $D_4$ as exceptional since it has the exceptional triality symmetry. Although the family $D_n$ of course exists in any dimension, this triality symmetry of $D_4$ of permuting the three legs (see  Fig. \ref{figE8CoxPl}) is accidental to 4D. But there is also a countably infinite family of 3D root systems $A_1\times I_2(n)$, which yields $I_2(n)\times I_2(n)$ in four dimensions (using notation for the product of the respective Coxeter groups, rather than the sum of the root systems). The case $A_1^3$ is of course contained in this, but is a simple illustrative example yielding $A_1^4$. 

There is of course a link between the 4D root systems and the even subgroups of the quaternions, and thus via the McKay correspondence to the $ADE$ Lie algebras. We note that we essentially construct the even subgroups of the quaternions as spinor groups from the 3D root systems, and that the Trinity $(12, 18, 30)$ connecting the irreducible representatios of the binary polyhedral groups and the Coxeter number of the $E$-type algebras is already the number of roots in the 3D root systems $(A_3, B_3, H_3)$. This suggests in general that one can go all the way from the 3D root systems via the binary polyhedral groups to the $ADE$ algebras, and in particular that the icosahedral root system $H_3$ should be related to $E_8$. We have recently constructed the $240$ roots of $E_8$ as a double cover of the $120$ elements of $H_3$ in the $2^3=8$-dimensional Clifford algebra of 3D space \cite{Dechant2016Birth}.

For this more general correspondence between 3D and 4D root systems it is then an interesting question to see whether Arnold's original link via the 4D Coxeter exponents carries over to all cases. We will discuss our new correspondence from this point of view in this paper. In Section \ref{sec_induct} we summarise some basic definitions and the proof of the induction correspondence via Clifford algebra. We then consider the geometry of the Coxeter plane from a Clifford algebra perspective, which allows one to completely factorise the Coxeter elements within the algebra, finding the exponents geometrically (Section \ref{sec_plane}). The 3D root systems and associated 4D Coxeter plane geometries are discussed explicitly in Section \ref{sec_cases}. In Section \ref{sec_McKay}, we consider this new set of root systems in the light of the McKay correspondence, making an indirect link between 2D/3D root systems and $ADE$, and connecting the McKay correspondence with the number of roots in the 2D/3D root systems.  
In Section \ref{sec_ADE} we consider a new direct connection between this new set of 2D/3D root systems and $ADE$ root systems, constructing the diagrams of the latter from the former. We conclude in Section \ref{sec_concl}.

\section{Background}\label{sec_induct}

The relationship between root systems, Coxeter (Weyl groups) and Lie theory is well known \cite{Humphreys1990Coxeter, FuchsSchweigert1997}. The crystallographic root systems arise in the context of root lattices in Lie theory and essentially allow one to classify semi-simple Lie algebras. The associated Coxeter groups are the Weyl groups of these root lattices. In other work, however, we were also interested in non-crystallographic groups and root systems, which do not have an associated Lie algebra, and also feature in the context of Trinities considered here \cite{DechantTwarockBoehm2011H3aff,DechantTwarockBoehm2011E8A4}. We therefore introduce the notion of a root system insofar as it is needed to prove the induction theorem. 

\subsection{Clifford algebras, Root systems and Coxeter groups}

\begin{defn}[Root system] \label{DefRootSys}
A \emph{root system} is a collection $\Phi$ of non-zero (root)  vectors $\alpha$ spanning an $n$-dimensional Euclidean vector space $V$ endowed with a positive definite bilinear form, which satisfies the  two axioms:
\begin{enumerate}
\item $\Phi$ only contains a root $\alpha$ and its negative, but no other scalar multiples: $\Phi \cap \mathbb{R}\alpha=\{-\alpha, \alpha\}\,\,\,\,\forall\,\, \alpha \in \Phi$. 
\item $\Phi$ is invariant under all reflections corresponding to root vectors in $\Phi$: $s_\alpha\Phi=\Phi \,\,\,\forall\,\, \alpha\in\Phi$. 
The reflection $s_\alpha$ in the hyperplane with normal $\alpha$ is given by $$s_\alpha: x\rightarrow s_\alpha(x)=x - 2\frac{(x|\alpha)}{(\alpha|\alpha)}\alpha,$$\label{refl} where $(\cdot \vert \cdot)$ denotes the inner product on $V$.
\end{enumerate}
\end{defn}
A subset $\Delta$ of $\Phi$, called \emph{simple roots} $\alpha_1, \dots, \alpha_n$ is sufficient to express every element of $\Phi$ via linear combinations with coefficients of the same sign. 
For a \emph{crystallographic} root system, these are $\mathbb{Z}$-linear combinations, whilst for the \emph{non-crystallographic} root systems one needs to consider certain extended integer rings. For instance  for $H_2$, $H_3$ and $H_4$ one has the extended integer ring $\mathbb{Z}[\tau]=\lbrace a+\tau b| a,b \in \mathbb{Z}\rbrace$, where $\tau$ is   the golden ratio $\tau=\frac{1}{2}(1+\sqrt{5})=2\cos{\frac{\pi}{5}}$, and $\sigma$ is its Galois conjugate $\sigma=\frac{1}{2}(1-\sqrt{5})$ (the two solutions to the quadratic equation $x^2=x+1$), and linear combinations are with respect to this $\mathbb{Z}[\tau]$. This integrality property of the crystallographic root systems  (types $A$-$G$) leads to an associated lattice which acts as a root lattice for Lie algebras, which are named accordingly. In contrast, no such lattice exists for the non-crystallographic groups (types $H$ and $I$), which accordingly do not have associated Lie algebras, and are perhaps less familiar as a result.

Root systems and simple roots are convenient for considering reflection groups: each root vector defines a hyperplane that it is normal to and therefore a reflection in that hyperplane. Multiplying together such simple reflections 
$s_i: x\rightarrow s_i(x)=x - 2\frac{(x|\alpha_i)}{(\alpha_i|\alpha_i)}\alpha_i$
 generates a reflection group. This is in fact a Coxeter group, since the simple reflections $s_i$ satisfy the defining relations:
\begin{defn}[Coxeter group] A \emph{Coxeter group} is a group generated by a set of involutory generators $s_i, s_j \in S$ subject to relations of the form $(s_is_j)^{m_{ij}}=1$ with $m_{ij}=m_{ji}\ge 2$ for $i\ne j$. 
\end{defn}

Root systems are therefore a useful paradigm for reflection groups. However, Clifford algebras \cite{Hestenes1966STA,  LasenbyDoran2003GeometricAlgebra, Porteous1995Clifford, Lounesto1997, Garling2011Clifford} are also very efficient at performing reflections and are in fact very natural -- perhaps the most natural -- objects to consider in this framework \cite{Dechant2015ICCA}: the definition of a root system only stipulated a vector space with an inner product. So without loss of generality one can construct the Clifford algebra over that vector space by using this inner product. We therefore define an algebra product via  $xy=x\cdot y+x \wedge y$, where the inner product (given by the symmetric bilinear form) is the symmetric part $x\cdot y=(x|y)=\frac{1}{2}(xy+yx)$, and the exterior product the antisymmetric part $x\wedge y=\frac{1}{2}(xy-yx)$. (This also means that parallel vectors  commute whilst orthogonal vectors anticommute.) 
We extend the algebra product via linearity and associativity. This enlarges the algebra to a $2^n$-dimensional vector space, which is isomorphic to the familiar exterior algebra, though they are not isomorphic as algebras. In fact, the Clifford algebra is much richer, since the algebra product is invertible in the sense that the  inverse of multiplication with a non-null vector $x$ is simply $x^{-1}=\frac{x}{|x|^2}$ since $xx=x\cdot x = |x|^2$ (in the positive signature spaces we will consider there are no null vectors anyway). Using this form for the inner product $x\cdot y=\frac{1}{2}(xy+yx)$ in the (simple) reflection formula $s_i: x\rightarrow s_i(x)=x-2\frac{\alpha_i\cdot x}{\alpha_i\cdot \alpha_i}  \alpha_i$ and assuming unit normalisation of roots $\alpha_i\cdot \alpha_i=1$, one gets the much simplified version for the reflection formula 
$$s_i: x\rightarrow s_i(x)=x-2\cdot \frac{1}{2}(x \alpha_i+ \alpha_ix) \alpha_i=x-x \alpha_i^2- \alpha_ix \alpha_i=-\alpha_ix\alpha_i.$$
Moreover, since via the Cartan-Dieudonn\'e theorem most `interesting' (at least from a mathematical physics perspective: orthogonal, conformal, modular) groups can be written as products of reflections \cite{Dechant2015ICCA,Hestenes1990NewFound,Hestenes2002PointGroups,Hestenes2002CrystGroups,Hitzer2010CLUCalc}, this formula actually provides a completely general way of performing such transformations by successive multiplication with the unit vectors defining the reflection hyperplanes
$$s_1\dots s_k: x\rightarrow s_1\dots s_k(x)=(-1)^k\alpha_1\dots\alpha_kx\alpha_k\dots\alpha_1=:(-1)^k Ax\tilde{A}.$$
The tilde denotes the reversal of the order of the constituent vectors in the product $A=\alpha_1\dots\alpha_k$. In order to study the groups of transformations one therefore only needs to consider products of root vectors in the Clifford algebra. This is therefore an extremely (if not completely) general way of doing group theory.

Since $\alpha_i$ and $-\alpha_i$ encode the same reflection, products of unit vectors are double covers of the respective orthogonal transformation, as $A$ and $-A$ encode the same transformation. We call even products $R$, i.e. products of an even number of vectors, spinors or rotors, and a general product $A$ versors or pinors. They form the $\Pin$ group and constitute a double cover of the orthogonal group, whilst the even products form the double cover of the special orthogonal group, called the $\Spin$ group.
Clifford algebra therefore provides a particularly natural and simple construction of the $\Spin$ groups. 

Thus the remarkably simple construction of the binary polyhedral groups (which are the spin double covers of the polyhedral groups) in our context is not at all surprising from a Clifford point of view. It was even known separately that even subgroups of the quaternions form root systems in 4D  \cite{Humphreys1990Coxeter} and that these even quaternion subgroups are in the McKay correspondence with the $ADE$ Lie algebras. But it appears that it was not appreciated that the quaternions arose in a geometric guise as the spin group in three dimensions with the even quaternion groups determined by the 3D root systems. To our knowledge the full connection had not been realised before. 

\subsection{Induction Theorem}\label{indthrm}
Whilst the above discussion was completely general for orthogonal groups in spaces of arbitrary dimension and signature (and via some isomorphisms also the conformal and modular groups \cite{Dechant2015ICCA,Dirac1936,HestenesSobczyk1984}), our construction is  based on 3D geometry, and thus very straightforward. Consider the Clifford algebra of 3D generated by three orthogonal unit vectors $e_1$, $e_2$ and $e_3$. This yields an eight-dimensional vector space generated by the elements
$$
  \underbrace{\{1\}}_{\text{1 scalar}} \,\,\ \,\,\,\underbrace{\{e_1, e_2, e_3\}}_{\text{3 vectors}} \,\,\, \,\,\, \underbrace{\{e_1e_2=Ie_3, e_2e_3=Ie_1, e_3e_1=Ie_2\}}_{\text{3 bivectors}} \,\,\, \,\,\, \underbrace{\{I\equiv e_1e_2e_3\}}_{\text{1 trivector}}.
$$
Any of the bivectors or trivectors  square to $-1$. Thus one gets different imaginary units based on a real vector space, without complexifying the whole space. See \cite{Hitzer2013Sqrts} for detail on the theory of square roots of $-1$ in Clifford algebra. In fact one needs to be careful with such imaginary units since they do not necessarily (anti)commute. In fact, the scalar and the three bivectors also satisfy quaternionic relations. Thus the other two normed division algebras emerge naturally within Clifford algebras, without the need to complexify or quaternionify the whole underlying vector space. We see in the context of the Coxeter plane that this is actually much more natural and geometrically insightful. The geometric interpretation of these elements of the Clifford algebra is still for vectors as lines or directions; and since a pair of vectors defines a plane (in any dimension) the bivectors are planes, whilst the trivector is a volume, and the scalar a point/number. Thus one sees that  despite considering the geometry of three dimensions, there is actually a natural eight-dimensional space associated with it. Furthermore, there is also a four-dimensional subalgebra, the even subalgebra consisting of the scalar and the bivectors (the one that satisfies quaternionic relations). This subalgebra is in fact the 4D space that allows us to define 4D root systems from 3D root systems. We have used the full 8D algebra in other work \cite{Dechant2016Birth, dechant2014SIGMAP}, e.g. for constructing the root system $E_8$ from the icosahedron $H_3$ or for defining representations, but will only consider the even subalgebra from now in this work.

We now have the background we need in order to prove that any 3D root system yields a 4D root system \cite{Dechant2012Induction}. Multiplying together root vectors in the Clifford algebra generally yields pinors in the full 8D algebra, but even products will stay in the even subalgebra (4D). The respective polyhedral (discrete subgroup of the special orthogonal) group acts on a vector $x$ via $\tilde{R}xR$. The spinors $R$ form a (spin) double cover of this  polyhedral group and  as well as a group under multiplication $R_1R_2$, which is actually the respective binary polyhedral group. A general spinor  has components in the even subalgebra $R=a_0+a_1e_2e_3+a_2e_3e_1+a_3e_1e_2$, which is a four-dimensional vector space. We can also endow this vector space with a Euclidean inner product by defining $(R_1,R_2)=\frac{1}{2}(R_1\tilde{R}_2+R_2\tilde{R}_1)$ for  two spinors $R_1$ and $R_2$. This induces the norm $|R|^2=R\tilde{R}=a_0^2+a_1^2+a_2^2+a_3^2$. It is thus very natural to think of spinors as living in Euclidean four-dimensional space. In fact, not only can we think of each of them as a 4D vector, but the spinor groups actually yield a collection of vectors in 4D. It is easy to show that this collection of vectors satisfies the axioms of a root system.

\begin{thm}[Induction Theorem]\label{HGA_4Drootsys}
Any 3D root system gives rise to a spinor group $G$ which induces a root system in 4D.
\end{thm}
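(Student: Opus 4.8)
The plan is to verify the two axioms of Definition \ref{DefRootSys} directly for the set of vectors obtained by taking all spinors arising as even products of the simple roots of a given 3D root system $\Phi$. First I would fix the data: let $\alpha_1,\alpha_2,\alpha_3$ be (unit-normalised) simple roots of a 3D root system with Coxeter group $W$, and let $G$ be the spinor group generated by the pairwise products $\alpha_i\alpha_j$ under Clifford multiplication; this is a finite group since it double-covers the finite polyhedral group $W^+$. Identifying each spinor $R=a_0+a_1e_2e_3+a_2e_3e_1+a_3e_1e_2$ with the 4D vector $(a_0,a_1,a_2,a_3)$ and equipping this space with the inner product $(R_1,R_2)=\tfrac12(R_1\tilde R_2+R_2\tilde R_1)$ and norm $|R|^2=R\tilde R$ introduced above, I take $\Phi'=G$ (or, if $-1\in G$ as it will be, the set $G$ itself already being closed under negation) as the candidate root system in $V'=\bR^4$.

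Next I would check axiom (1): since every element of $G$ is a product of unit vectors, $|R|=1$ for all $R\in G$, so all candidate roots lie on the unit sphere; hence if $\lambda R\in G$ with $\lambda\in\bR$ then $|\lambda|=1$, giving $\Phi'\cap\bR R=\{R,-R\}$, provided $-1\in G$, which holds because $(\alpha_i\alpha_j)^{m_{ij}}=\pm1$ and in fact $-1$ is realised (e.g. $\alpha_1\alpha_2\alpha_1\alpha_2\cdots$ or simply $(\alpha_i\alpha_j)^{m_{ij}}=-1$ for the relevant $m_{ij}$, or $\alpha_i\alpha_j\alpha_i\alpha_j$ type relations; one notes $-R\in G$ whenever $R\in G$). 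Then axiom (2): the reflection $s_R$ in the hyperplane normal to a spinor $R$ must preserve $\Phi'$. The key computation is to show that in the inner-product space of spinors, the reflection $s_R(x)=x-2\frac{(x|R)}{(R|R)}R$ acts as $x\mapsto -Rx^{\dagger}R$ or a similarly bilinear Clifford expression — i.e. that reflections in the 4D spinor space are again realised by Clifford sandwiching — so that $s_R$ maps $G$ to $G$ because $G$ is closed under Clifford multiplication and reversal. Establishing this identification of the abstract 4D reflection with an algebra operation internal to $G$ is the crucial structural point; once it holds, closure of $\Phi'$ under all such reflections is immediate from $G$ being a group.

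I expect the main obstacle to be precisely this second step: showing that reflections of the 4D Euclidean spinor space in hyperplanes normal to elements of $G$ are realised by operations that preserve $G$. The subtlety is that Clifford multiplication $x\mapsto Rx$ for a unit spinor $R$ is a \emph{rotation} of the 4D space, not a reflection, so one must combine left/right multiplication with the reversal (conjugation) anti-automorphism to produce an orientation-reversing isometry; one has to check carefully that the resulting map is exactly $s_R$ with respect to the bilinear form $(R_1,R_2)=\tfrac12(R_1\tilde R_2+R_2\tilde R_1)$, using $\tilde{\tilde R}=R$, $\widetilde{R_1R_2}=\tilde R_2\tilde R_1$, and $R\tilde R=1$. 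A clean way to organise this is to note that $|x-R|^2-|x+R|^2 = -2(x\tilde R+R\tilde x)$ expands the inner product in Clifford terms, and then verify the reflection formula reduces to a product of elements of $G$. Once the reflection is exhibited as such a product, the remaining requirements — finiteness, spanning of $\bR^4$ (which follows because $G$ is non-abelian and contains the bivector copy of a quaternion group, whose elements span), and the rank being $4$ — are routine.
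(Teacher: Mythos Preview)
Your proposal is correct and follows essentially the same route as the paper: identify the spinor group $G$ with a set of unit vectors in the 4D even subalgebra equipped with $(R_1,R_2)=\tfrac12(R_1\tilde R_2+R_2\tilde R_1)$, then verify the two root-system axioms, the crux being that the 4D reflection $s_{R_1}(R_2)$ is realised by the Clifford expression $-R_1\tilde R_2 R_1$, which lies in $G$ by closure under multiplication, reversal, and negation. The paper states this reflection identity directly (your anticipated $-Rx^\dagger R$ with $\dagger=\tilde{\ }$ is exactly it), while you frame it as the step to be checked; you are also more explicit than the paper about auxiliary points such as $-1\in G$ and spanning of $\bR^4$, which the paper leaves implicit.
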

\begin{proof}
Free multiplication of simple roots in this 3D root system yields a group of pinors in the full 8D algebra, doubly covering the Coxeter group transformations.  Even products form a spinor group $G$ in the even subalgebra, having group elements of the form $R=a_0+a_1e_2e_3+a_2e_3e_1+a_3e_1e_2$. It was shown above that this even subalgebra is a 4D vector space with an inner product $(R_1,R_2)=\frac{1}{2}(R_1\tilde{R}_2+R_2\tilde{R}_1)$. We can thus reinterpret each spinor $R$ as a 4D vector $(a_0, a_1, a_2, a_3)^T$, and we denote the collection  of the 4D vectors for all the 3D spinors  $R \in G$ by $\Phi$. These provide the vector space, inner product and collection of vectors $\Phi$ in the definition of a root system (definition \ref{DefRootSys}). It remains to check the two defining axioms of a root system for this collection $\Phi$:
\begin{enumerate}
\item By construction, $\Phi$ contains the negative of a root $R$ because both $R$ and $-R$ encode the same rotation since spinors provide a double cover. Thus if $R$ is in $\Phi$, then so is $-R$; other scalar multiples do not arise because of the normalisation to unity. 
\item Reflections with respect to the inner product $(R_1,R_2)$ defined above are given by $R_2'=R_2-2(R_1, R_2)/(R_1, {R}_1) R_1=-R_1\tilde{R}_2R_1$. But $G$ is closed under multiplication by $-1$ and reversal since $-R$ encodes the same group transformation as $R$, and $\tilde{R}$ is its inverse. Thus $-R_1\tilde{R}_2R_1\in G$ for $R_1, R_2 \in G$ by closure of the group under group multiplication, reversal and multiplication by $-1$. Therefore $\Phi$ is invariant under all reflections in the vectors $R$ and is thus a root system.
\end{enumerate}
\end{proof}

This proof does not make reference to any specific root system in 3D, and therefore allows one to construct a 4D root system for any 3D root system. This goes some way towards explaining why four dimensions are particularly rich for root systems because of additional, exceptional root systems. If we did not already know that they existed this would be a constructive proof. One can calculate each case explicitly showing how each 4D root system arises from the 3D root system (these will be discussed in detail in Section \ref{sec_cases}), but it is also obvious from the order of the 3D groups involved and the number of roots in the known 4D root systems. 
In this way $(A_3, B_3, H_3)$ gives rise to  $(D_4, F_4, H_4)$, which will turn out to be crucial in the following discussion of Trinities and ADE correspondences. The orders of the 3D Coxeter groups are $(24, 48, 120)$, which are halved by going to the even subgroup but then doubled again by going to the binary double cover. $(24, 48, 120)$ is therefore exactly the number of roots in 4D and thus they induce the root systems  $(D_4, F_4, H_4)$. The case $A_1^3$ gives $A_1^4$, and more generally the countably infinite family $A_1
\times I_2(n)$ gives rise to $I_2(n)\times I_2(n)$. We will construct these explicitly in Section \ref{sec_cases} together with the respective construction of the 4D Coxeter plane and explicit factorisation of the Coxeter element, after we have introduced these in the following section. This is motivated by the fact that Arnold's connection between $(A_3, B_3, H_3)$ and $(D_4, F_4, H_4)$ was via exponents, whilst we have in fact found a wider correspondence $(A_1\times I_2(n), A_3, B_3, H_3)\rightarrow (I_2(n)\times I_2(n), D_4, F_4, H_4)$ including the two infinite families. We will therefore extend the connection via Springer cone decomposition and exponents to this wider correspondence, establishing it more firmly as a new correspondence. We are therefore now considering the geometry of the Coxeter plane -- readers interested only in the correspondence but not in the details of the Coxeter plane may skip this next section.

\section{Coxeter element: Coxeter plane and  exponents}\label{sec_plane}

The Coxeter element $w$ of a Coxeter group is the product of all the simple reflections i.e. $w=s_1\dots s_n$. 
It has the highest order of all the elements in the group. 
This order of the Coxeter element is called the Coxeter number $h$. 
The order in the product in $w$ does matter, but all such Coxeter elements are conjugate to each other, and have equivalent descriptions in what follows. 

Picking one such Coxeter element, one can look for its eigenvectors and eigenvalues.
Since it does not in general have real eigenvalues, people usually look for complex eigenvalues by complexifying the whole space. 
This is in fact unnecessary in the Clifford algebra setting and masks the underlying geometry, as we will show in the following \cite{Dechant2016AGACSE,Dechant2012AGACSE}. 
One can show that there exists a distinguished plane in which the Coxeter element acts as an $h$-fold rotation \cite{Humphreys1990Coxeter}. 
Projection of a root system onto this Coxeter plane is thus a convenient way of visualising any finite Coxeter group in any dimension.
There are actually several such planes where the Coxeter element acts as an $h$-fold rotation, and one can look for eigenvalues of the form $\exp(2\pi i m_i/h)$. The Coxeter plane is the one distinguished by having $m=1$ and $m=h-1$. 
We will see that the complex structure in the exponential is  given by the bivector of the respective plane, that there are therefore several different complex structures and that they emerge purely algebraically from the factorisation of the Coxeter element. 
The Coxeter element therefore factorises as the product of bivector exponentials describing $h$-fold rotations in orthogonal planes. Complex eigenvalues thus arise geometrically without the need to complexify the whole real vector space. Clockwise and counterclockwise rotations in the same plane trivially yield exponents $m$ and $h-m$.

\begin{figure}
	\begin{center}

\includegraphics[width=8cm]{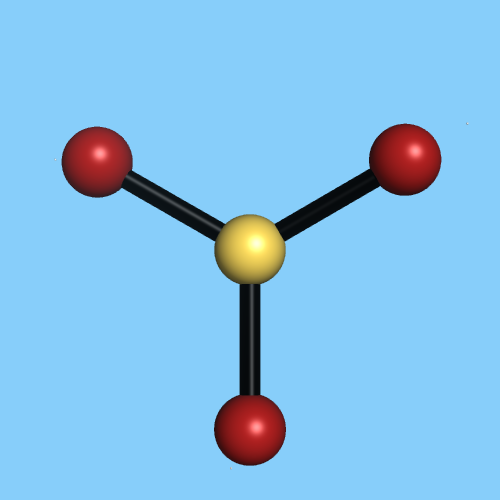}

\end{center}
\caption[$E_6^+$]{An alternating colouring of the $D_4$ graph that is used in the construction of the Coxeter plane. }
\label{figE8CoxPl}
\end{figure}

The  construction of the Coxeter plane is usually via a two-fold colouring of the  Dynkin diagram as illustrated for $D_4$ in Fig. \ref{figE8CoxPl}.
Since any finite Coxeter group has a tree-like diagram, one can partition the simple roots into two coloured sets (red and gold in the figure, or more commonly white and black) of roots, which are mutually orthogonal within each set. Since the Cartan matrix is positive definite, one eigenvector has  all positive entries, and is called the Perron-Frobenius eigenvector. For the example of $D_4$ here the Cartan matrix is 
$$\left( \begin{array}{cccc}
2 & 0 & 0 & -1\\
0 & 2 & 0 & -1\\
0 & 0 & 2 & -1\\
-1 & -1 & -1 & 2\end{array} \right),$$
 which has Perron-Frobenius eigenvector $(1, 1, 1, 2\cos \frac{\pi}{6})^T$. This allows one to show the existence of the invariant Coxeter plane. One takes the reciprocals (weights) of the simple roots and then defines two distinguished vectors: a white vector that is a linear combination of the white weights with the corresponding coefficients from the Perron-Frobenius eigenvector, and a black one, which is a linear combination of the black weights with the right Perron-Frobenius coefficients. The Coxeter plane is then the plane defined by these two vectors. In a Clifford algebra setup this is just given by the bivector that is  the outer product of these two vectors.

We start with a toy model in the plane for $I_2(n)$.
The simple roots for $I_2(n)$  can  be taken as
 $\alpha_1=e_1$ and $\alpha_2=-\cos{\frac{\pi}{n}}e_1+\sin{\frac{\pi}{n}}e_2$.
The Coxeter element $w$ is doubly covered in the Clifford setting by  the Coxeter versor $W$ acting as 	$v\rightarrow wv=\tilde{W}vW$. It describes the $n$-fold rotation encoded by the $I_2(n)$ Coxeter element and is given by 
\begin{equation}
W=\alpha_1\alpha_2=-\cos{\frac{\pi}{n}}+\sin{\frac{\pi}{n}}e_1e_2=-\exp{\left(-{\frac{\pi}{n} e_1e_2}\right)}.
\end{equation}
As one can see, different choices of signs and orders result in different signs in this formula but lead to an essentially equivalent descripion. In Clifford algebra it therefore immediately follows that the complex structure $i$ is actually the bivector $e_1e_2$ describing the plane (it could of course be nothing else) and that the action of the $I_2(n)$ Coxeter element is described by a versor that encodes rotations in this Coxeter plane. It  yields $h=n$ since trivially $W^n=(-1)^{n+1}$ such that $v\rightarrow wv=\tilde{W}vW=v$. Since $I=e_1e_2$ is the bivector defining the plane of $e_1$ and $e_2$, it anticommutes with both $e_1$ and $e_2$, since orthogonal vectors anticommute whilst parallel vectors commute. Therefore 
 one can take $W$ through to the left, which introduces a minus sign in  the bivector part and is thus equivalent to reversal $\tilde{W}$. Thus one arrives at the complex eigenvector equation 
$$
	v\rightarrow wv=\tilde{W}vW=\tilde{W}^2v=\exp{\left(\pm{2\pi I/n}\right)}v.
$$%
The standard result for the complex eigenvalues thus arises geometrically, without the need for complexification, and the complex structure instead arises from the bivector describing the rotation plane.  

Generally, if $W$ is a bivector exponential describing a rotation in a plane, then if a vector $v$ lies in this plane then the above applies and one can only pull $W$ to the left at the expense of reversing $W$ 
$$	v\rightarrow wv=\tilde{W}vW=\tilde{W}^2v,
$$%
giving rise to the complex eigenvalue equation. Conversely, if $v$ is orthogonal to that plane such that the bivector describing the plane commutes with $v$, one just has
$$	v\rightarrow wv=\tilde{W}vW=\tilde{W}Wv=v,
$$%
and thus the vector is invariant under that rotation. Thus, if a Coxeter versor $W$ factorises into orthogonal eigenspaces $W=W_1\dots W_k$  with the $W_i$s given by bivector exponentials and  $v$ lying in the plane defined by $W_j$, then all the orthogonal $W_i$s commute through and cancel out, whilst the one that defines the eigenplane that $v$ lies in  ($W_j$ described by the bivector $B_j$) gives the  complex eigenvalue equation with respect to $B_j$
$$\tilde{W}vW=\tilde{W_1}\dots\tilde{W_k}vW_k\dots W_1=\tilde{W_j}^2\dots\tilde{W_k}W_kv=\tilde{W_j}^2v=\exp(2\pi B_jm/h)v.
$$%
If $m$ is an exponent then so is $h-m$ since $w^{-1}$ will act as $${W_j}^2v=\exp(-2\pi B_jm/h)v=\exp(2\pi B_j(h-m)/h)v$$ (in particular $1$ and $h-1$ are always exponents arising from the Coxeter plane). As stated above these are just righthanded and lefthanded rotations in the respective eigenplanes, with bivectors $B_j$ giving the complex structures. If $W$ has pure vector factors then these act as reflections and trivially yield the exponents $h/2$.

This  description in terms of Clifford algebra therefore yields much deeper geometric insight, whilst avoiding ungeometric and unmotivated complexification. One sees that the eigenvalues and eigenvectors are not so much eigenvectors with complex eigenvalues, but rather eigenplanes of the Coxeter element. The complex nature of the eigenvalue arises because bivector exponentials describe rotations in planes with the plane bivector acting as an imaginary unit. Like for the 2D groups, the 3D and 4D geometry is completely governed by the above 2D geometry in the Coxeter plane, since the remaining normal vector (3D) or bivector (4D) are trivially fixed.

We explain the cases of $A_4$ and $B_4$ for illustrative purposes here, working in the four-dimensional Clifford algebra generated by the four orthogonal unit vectors $e_1$, $e_2$, $e_3$ and $e_4$. We start with $A_4$, which has Coxeter number $h=5$ and exponents $(1,2,3,4)$.  
We take as the simple roots 
$$\alpha_1=\frac{1}{\sqrt{2}}(e_2-e_1), \,\,\,\alpha_2=\frac{1}{\sqrt{2}}(e_3-e_2),\,\,\, \alpha_3=\frac{1}{\sqrt{2}}(e_4-e_3) $$
$$\text{ and } \alpha_4=\frac{1}{{2}}(\tau e_1+\tau e_2 +\tau e_3 +(\tau-2)e_4).$$
The Perron-Frobenius eigenvector of the Cartan matrix is $$(1, \tau, \tau, 1)^T=(1, 2\cos\frac{\pi}{5},  2\cos\frac{\pi}{5}, 1)^T,$$ which gives as the two coloured vectors $e_3+e_4$ and $-e_1+e_2+e_3+(2\tau+1)e_4$. From these, the Coxeter plane unit bivector $B_C\propto -e_1e_3-e_1e_4+e_2e_3+e_2e_4-1/2(\tau-1)e_3e_4$ is constructed.   The Coxeter versor $W=\alpha_3\alpha_1\alpha_2\alpha_4$ is  $4W= 1-e_2e_3+e_1e_4+(\tau-1)(e_3e_4+e_2e_4-e_1e_3)-(\tau+1)e_1e_2-(2\tau-1)e_1e_2e_3e_4$. It is straightforward (if tedious) to show that $\tilde{W}B_CW=B_C$. Thus the Coxeter element stabilises the Coxeter plane, i.e. it is an eigenplane of the Coxeter element. This Coxeter element can be written as the product of bivector exponentials in the orthogonal planes given by $B_C$ and  $IB_C$ (where $I$ customarily denotes the pseudoscalar of the space, i.e. here $I=e_1e_2e_3e_4$). This yields the correct angles and exponents  $(1,2,3,4)$ purely algebraically from this factorisation
 $$W=\exp\left(\frac{\pi}{5}B_C\right)\exp\left(-\frac{2\pi}{5}IB_C\right).$$ Taking a different order in the product of simple roots in the Coxeter element introduces overall minus signs as well as minus signs in the exponentials, but this does not ultimately change the geometric description.

The  projection of the $20$ vertices into the Coxeter plane forms two concentric decagons. 
$A_4$ is unusual in that such a projection from 4D usually yields 4 concentric rings of $h$ points, but here only two rings of $2h$ points. In fact, it consists of two copies of $H_2$ (the decagon) with a relative factor of $\tau$, which is due to the two-fold degeneracy in the Perron-Frobenius eigenvector. This is similar to $E_8$ and $H_4$ (and likewise for $D_6$ and $H_3$), as is well known and explained e.g. in \cite{Dechant2016AGACSE}, since by removing four of the eight nodes one gets a diagram folding from $A_4$ to $H_2$. 
In the Coxeter plane the Coxeter element therefore acts as a rotation by $2\pi/5$ , whilst in the plane $IB_C$ it acts as a rotation by $4\pi/5$. 

$B_4$  has exponents $(1, 3, 5, 7)$, which are again given by the factorisation of the Coxeter element   $$W=\exp\left(-\frac{\pi}{8}B_C\right)\exp\left(\frac{3\pi}{8}IB_C\right).$$ 
To provide the detail, we use the following choice of (normalised) simple roots
$$\alpha_1=e_4, \,\,\,\alpha_2=\frac{1}{\sqrt{2}}(e_3-e_4),\,\,\, \alpha_3=\frac{1}{\sqrt{2}}(e_2-e_3) \text{ and } \alpha_4=\frac{1}{\sqrt{2}}(e_1-e_2).$$
The Perron-Frobenius eigenvector gives coloured vectors $2\cos\frac{\pi}{8}(e_1+e_2)+2\cos\frac{3\pi}{8}(e_3+e_4)$ and $2e_1+\sqrt{2}(e_2+e3)$. 
The Coxeter plane unit bivector is therefore $B_C\propto 2\cos\frac{3\pi}{8}(\sqrt{2}(e_1e_3-e_1e_2-e_2e_4-e_3e_4))+2(e_2e_3-e_1e_4)$. The Coxeter versor $W=\alpha_3\alpha_1\alpha_2\alpha_4$ is  $4W/\sqrt{2}=1 +e_2e_3-e_1e_3+e_2e_4-e_1e_4+e_1e_2+e_3e_4+e_1e_2e_3e_4$. Using the form for $B_C$ found above this is actually equivalent to $W=\exp(-\frac{\pi}{8}B_C)\exp(\frac{3\pi}{8}IB_C)$. Therefore the standard complexification again misses that the eigenplanes and the correct exponents arise from this factorisation of the Coxeter element in the Clifford algebra.

The treatment for the other 4D groups is analogous. $A_4$ and $B_4$ were a good illustration of the Clifford Coxeter plane geometry irrespective of our induction construction; but we will discuss $D_4$, $F_4$ and $H_4$ in the context of the root systems $A_3$, $B_3$ and $H_3$ that induce them in the next section.  Table \ref{tab:2} shows the factorisation for those groups. The Coxeter element acts in the Coxeter plane $B_C$ as a rotation by $\pm 2\pi/h$ (clockwise and counterclockwise), and in the plane defined by $IB_C$ as $h$-fold rotations giving the remaining exponents algebraically.

\section{4D from 3D: the cases of the correspondence\label{sec_cases}}

{As we discussed in Section} \ref{indthrm}, each 3D root system induces a corresponding root system in 4D. We therefore now discuss each case in this correspondence, and find the factorisation of the 4D Coxeter element, which gives rise to the correct exponents. This therefore extends Arnold's original observation to include all cases of our new correspondence, as well as showing that a Clifford algebra approach to the geometry of the Coxeter plane is beneficial in many ways.

\subsection{$A_1^3$ and $A_1^4$}

The simplest root system $A_1$ is just a root and its negative, so three copies of $A_1$ are just given by three orthogonal unit vectors as the simple roots $\alpha_1=e_1, \alpha_2=e_2, \alpha_3=e_3$. Free multiplication (essentially amounting to multiplying together reflections in these simple roots) of these yields the eight elements in the 3D Clifford algebra from Section \ref{indthrm} and their negatives. Restricting to even products one gets $\pm 1, \pm e_1e_2, \pm e_2e_3, \pm e_3e_1$. These are essentially the quaternion group, and can be written as a collection of 4D vectors as $(\pm 1, 0, 0 ,0)$  { and permutations thereof}.  When thought of as a collection of 4D vectors, one sees that they are just the root system  $A_1^4$.

For simplicity when discussing the Coxeter element, rather than continuing to think of a 4D subspace of the 3D Clifford algebra, we switch to a formulation just in terms of the usual four Euclidean dimensions.  A choice of simple roots is given by  $\alpha_1=e_1$, $\alpha_2=e_2$, $\alpha_3=e_3$, $\alpha_4=e_4$ which gives $W=e_1e_2e_3e_4$. One could factor this into bivector exponentials in various (though somewhat trivial) ways. Since all four simple directions are orthogonal, reflections in pairs of them just give rotations by $\pi$ in the respective planes such that the exponents are all $\frac{h}{2}=\frac{2}{2}=1$ and $W=e_1e_2e_3e_4=(\cos \frac{\pi}{2}+\sin \frac{\pi}{2}e_1e_2)(\cos \frac{\pi}{2}+\sin \frac{\pi}{2}e_3e_4)=\exp(\frac{\pi}{2}e_1e_2)\exp(\frac{\pi}{2}e_3e_4)$ (though any other pair would give the same result). This gives exponents $1$ and $h-1=2-1=1$ from the first bivector exponential, and then the same again from the second. This is of course a special case of $A_1\times I_2(n)$, which we will revisit later.

\subsection{$A_3$ and $D_4$}

Starting with the tetrahedral root system $A_3$ by multiplying the simple roots, e.g. given by 
$$\alpha_1=\frac{1}{\sqrt{2}}(e_2-e_1),\,\,\, \alpha_2=\frac{1}{\sqrt{2}}(e_3-e_2)  \text{ and }  \alpha_3=\frac{1}{\sqrt{2}}(e_1+e_2),$$
one gets a group of 24 even products. This is the binary tetrahedral group consisting of 8 elements of the form $(\pm 1, 0, 0 ,0)$  and 16 of the form $\frac{1}{2}(\pm 1,\pm 1,\pm 1,\pm 1)$. As a collection of 4D vectors they form the $D_4$ root system.

{$D_4$} has {exponents $(1, 3, 3, 5)$} which is reflected in the fact that the Coxeter versor can be written as 
$W=\exp(-\frac{\pi}{6}B_C)\exp(\frac{\pi}{2}IB_C)=\exp(-\frac{\pi}{6}B_C)IB_C$. For instance  for simple roots
$$\alpha_1=e_1, \,\,\,\alpha_2=e_2,\,\,\, \alpha_3=e_3 \text{ and } \alpha_4=\frac{1}{2}(e_4-e_1-e_2-e_3)$$
the Coxeter element is $2W=2\alpha_1\alpha_2\alpha_3\alpha_4=e_1e_2e_3e_4-e_2e_3-e_1e_2+e_1e_3$. The Perron-Frobenius eigenvector is $(1, 1, 1, 2\cos \frac{\pi}{6})^T$, giving coloured vectors $e_1+e_2+e_3+3e_4$ and $4\cos \frac{\pi}{6} e_4$. The Coxeter plane bivector is therefore $1/\sqrt{3}(e_1+e_2+e_3)e_4$ given that $2\cos \frac{\pi}{6}=\sqrt{3}$. The Coxeter element thus factorises as $W=\exp(-\frac{\pi}{6}B_C)\exp(\frac{\pi}{2}IB_C)=\exp(-\frac{\pi}{6}B_C)IB_C$. The fact that the angle in the second bivector exponential is $\frac{\pi}{2}$ means that this second part of the Coxeter versor is simply the 
 product of two orthogonal vectors $e_1+e_2-2e_3$ and $e_1-e_2$ 
 rather than a `genuine' bivector exponential. The exponent is thus trivially $m=h/2=3$, as always for reflections. It can thus be easily checked that $\exp(-\frac{\pi}{6}B_C)\exp(\frac{\pi}{2}IB_C)=\frac{1}{2}(\sqrt{3}-B_C)IB_C=e_1e_2e_3e_4-e_2e_3-e_1e_2+e_1e_3$. Thus the Clifford factorisation gives rise to the correct exponents $(1, 3, 3, 5)$. Table \ref{tab:2} summarises the factorisations of the 4D Coxeter versors.

\subsection{$B_3$ and $F_4$}

As stated in Section \ref{indthrm}, the octahedral root system $B_3$, e.g. with a choice of simple roots
$$\alpha_1=e_3, \,\,\,\alpha_2=\frac{1}{\sqrt{2}}(e_2-e_3) \text{ and } \alpha_3=\frac{1}{\sqrt{2}}(e_1-e_2),$$
yields the root system $F_4$ via even products of roots, which form the binary octahedral group of order $48$.
They include the 24 spinors of the preceding subsection together with the 24 `dual' ones of the form $\frac{1}{\sqrt{2}}(\pm 1,\pm 1,0,0)$.

{$F_4$} has {exponents $(1, 5, 7, 11)$} which again is evident from the Clifford factorisation of the Coxeter element/versor $W=\exp(-\frac{\pi}{12}B_C)\exp(\frac{5\pi}{12}IB_C)$. 
We will again provide further detail for the choice of simple roots given by 
$$\alpha_1=\frac{1}{2}(e_4-e_1-e_2-e_3), \,\,\,\alpha_2=e_3,\,\,\, \alpha_3=\frac{1}{\sqrt{2}}(e_2-e_3) \text{ and } \alpha_4=\frac{1}{\sqrt{2}}(e_1-e_2).$$
The Perron-Frobenius eigenvector $(1, 2\cos \frac{\pi}{12}, 2\cos \frac{\pi}{12}, 1)^T$ gives coloured vectors
 $ 2\sqrt{2}\cos\frac{\pi}{12}(e_1+e_2)+(2+4\sqrt{2}\cos\frac{\pi}{12})e_4$ 
and 
$(2\cos\frac{\pi}{12}+\sqrt{2})e_1+2\cos\frac{\pi}{12}(e_2+e_3)+(6\cos\frac{\pi}{12}+\sqrt{2})e_4$. 
The Coxeter plane unit bivector is therefore 
$B_C\propto (4\cos\frac{\pi}{12}+\sqrt{2})(e_1e_3+e_2e_3+e_2e_4-e_1e_4)-4\cos\frac{\pi}{12}e_1e_2-(2\sqrt{2}+12\cos\frac{\pi}{12})e_3e_4$.
The Coxeter versor $W=\alpha_3\alpha_1\alpha_2\alpha_4$ is  
$4W=1 +e_2e_3+e_1e_3+e_2e_4-e_1e_4+3e_1e_2+e_3e_4+e_1e_2e_3e_4$.
 Using the form for $B_C$ found above it can again easily (if tediously) be shown that this is actually equivalent to $W=\exp(-\frac{\pi}{12}B_C)\exp(\frac{5\pi}{12}IB_C)$, giving rise to the exponents as claimed.

\subsection{$H_3$ and $H_4$}\label{H4}

Finally, the icosahedral root system $H_3$ gives rise to $H_4$ in 4D. For the choice of simple roots
$$\alpha_1=e_2, \,\,\,\alpha_2=-\frac{1}{2}(\tau e_1+e_2+(\tau-1)e_3)  \text{ and } \alpha_3=e_3,$$
one gets $120$ spinors forming the binary icosahedral group doubly covering the $60$ rotations of $A_5$. There are $8$, $16$, and $96$ respectively of the forms $(\pm 1, 0, 0, 0)$, $\frac{1}{2}(\pm 1, \pm 1, \pm 1, \pm 1)$ and $\frac{1}{2}(0, \pm 1, \pm (1-\tau), \pm \tau)$, which form the $H_4$ root system.  

{$H_4$} has Coxeter number $h=30$, {exponents $(1, 11, 19, 29)$} and factorisation $W=\exp(-\frac{\pi}{30}B_C)\exp(-\frac{11\pi}{30}IB_C)$.
The exact expressions get very unwieldy because of the golden ratio and square roots (see the Appendix), so for simplicity we give the results numerically for the choice of simple roots given by 
$$\alpha_1=\frac{1}{2}(\tau e_1-e_2+(\tau -1) e_4), \,\,\alpha_2=e_2,\,\, \alpha_3=-\frac{1}{{2}}((\tau -1)e_1+e_2+\tau e_3),\,\, \alpha_4=e_3.$$ 
The Perron-Frobenius eigenvector $(1, 1.989, 2.956, 2.405)^T$ gives coloured vectors proportional to
 $2.956e_4-e_1$ 
and 
$-4.784e_1+e_2+14.15e_4+1.209e_3$. 
The Coxeter plane unit bivector is then  
$B_C=-0.604e_2e_4-0.73e_3e_4-0.204e_1e_2-0.247e_1e_3$.
The Coxeter versor $W=\alpha_3\alpha_1\alpha_2\alpha_4$ is  
$4W=\tau+(2\tau -1)e_1e_3+\sigma e_3e_4-\tau^2 e_1e_2+e_2e_4-\sigma^2 e_1e_2e_3e_4$.
 Using the $B_C$ found above this is again equivalent to $W=\exp(-\frac{\pi}{30}B_C)\exp(-\frac{11\pi}{30}IB_C)$, giving the correct exponents.

\subsection{$A_1\times I_2(n)$ and $I_2(n)\times I_2(n)$}

The two copies of $I_2(n)$ are orthogonal in $I_2(n)\times I_2(n)$, such that one gets two sets of $1$ and $h-1=n-1$, i.e. $(1, 1, n-1, n-1)$. For instance  take $\alpha_1=e_1$, $\alpha_2=-\cos{\frac{\pi}{n}}e_1+\sin{\frac{\pi}{n}}e_2$, $\alpha_3=e_3$, $\alpha_4=-\cos{\frac{\pi}{n}}e_3+\sin{\frac{\pi}{n}}e_4$ as simple roots, then 
$W=(-\exp{\left(-\frac{\pi}{n} e_1e_2\right)})(-\exp{\left(-\frac{\pi}{n} e_3e_4\right)})=W_{12}W_{34}$ -- just two copies of the 2D case. 

This matches the Springer cone decomposition. For $I_2(n)$, the two bounding walls of a given Weyl chamber decompose the plane into a cone containing this Weyl chamber together with  its backward cone that also contains one Weyl chamber, whilst the complementary cones contain the other $n-1$ Weyl chambers each. In the projective plane one therefore gets the decomposition $2n=2(1+(n-1))$. For  $A_1\times I_2(n)$, one simply gets a doubling of this, since the $A_1$ just creates two copies of the $I_2(n)$ decomposition. One therefore gets the decomposition $4n=2(1+(n-1)+1+(n-1))$. 

This decomposition therefore matches the correct exponents of the Coxeter element even for the countably infinite family in the root system correspondence. Arnold's original link therefore extends to the full correspondence $(A_1\times I_2(n), A_3, B_3, H_3)\rightarrow (I_2(n)\times I_2(n), D_4, F_4, H_4)$ between root systems. 

\begin{table}

%
% Follow this input for your own table layout
%
\begin{tabular}{|c|c|c|c|}

	\hline
&$h$&$m_i$&$W$
	\tabularnewline 	\hline 	\hline 
	$A_4$&$5$&$1,2,3,4$&$\exp\left(\frac{\pi}{{5}}B_C\right)\exp\left(-\frac{{2}\pi}{5}IB_C\right)$\textcolor{white}{\Huge{$H_o$}}
	\tabularnewline	\hline
	$B_4$&$8$&$1,3,5,7$&$\exp\left(-\frac{\pi}{{8}}B_C\right)\exp\left(\frac{{3}\pi}{8}IB_C\right)$\textcolor{white}{\Huge{$H_o$}}
	\tabularnewline
		\hline	\hline
	$D_4$&$6$&$1,3,3,5$&$\exp\left(-\frac{\pi}{{6}}B_C\right)\exp\left(\frac{\pi}{{2}}IB_C\right)$\textcolor{white}{\Huge{$H_o$}}
		\tabularnewline \hline	
		$F_4$&$12$&$1,5,7,11$&$\exp\left(-\frac{\pi}{{12}}B_C\right)\exp\left(\frac{{5}\pi}{12}IB_C\right)$\textcolor{white}{\Huge{$H_o$}}
			\tabularnewline
				\hline
	$H_4$&$30$&$1,11,19,29$&$\exp\left(-\frac{\pi}{{30}}B_C\right)\exp\left(-\frac{{11}\pi}{30}IB_C\right)$\textcolor{white}{\Huge{$H_o$}}
				\tabularnewline
	\hline
	\end{tabular}
	\caption{Clifford factorisations of the 4D Coxeter versors.}
	\label{tab:2}       % Give a unique label
\end{table}

\section{Binary polyhedral groups, 4D root systems and the McKay correspondence}\label{sec_McKay}

The binary polyhedral groups are famously in a correspondence with the $ADE$-type affine Lie algebras via the McKay correspondence. Here we essentially have something between a Trinity and the McKay correspondence, since so far there is only one countably infinite family in the root system correspondence. However, the Trinity $(A_3, B_3, H_3)$ is connected in our correspondence to the binary tetrahedral, octahedral and icosahedral groups $(2T, 2O, 2I)$ and the 4D root systems that they induce $(D_4, F_4, H_4)$. However, in the McKay correspondence these are linked to $(E_6, E_7, E_8)$ via the tensor product structure of the binary groups and the Coxeter numbers $(12, 18, 30)$. We note that this is the number of roots in $(A_3, B_3, H_3)$, which hints at a direct correspondence between 3D root systems and $ADE$ Lie algebras, which we will elaborate on later. 

The root system $I_2(n)\times I_2(n)$ is essentially the dicyclic (or binary dihedral) group, which in the McKay correspondence is connected with the $D_n$ series of Lie algebras. If we want to extend the root system correspondence to a full $ADE$ correspondence we therefore need another countable family corresponding to $A_n$. This is in fact given by the 2D root systems $I_2(n)$. We had not previously included these in the correspondence, since 2D root systems are self-dual \cite{Dechant2012Induction} and thus did not appear to give new results: 

The Clifford algebra of two orthogonal unit vectors $e_1, \, e_2$ is 4-dimensional, 
$$
  \underbrace{\{1\}}_{\text{1 scalar}} \,\,\ \,\,\,\underbrace{\{e_1, e_2\}}_{\text{2 vectors}}  \,\,\, \,\,\, \underbrace{\{ e_1e_2\}}_{\text{1 bivector}},
$$
with the spaces of vectors and spinors both being of dimension two. There is therefore a canonical mapping between vectors $\alpha_i=a_1e_1+a_2e_2$ and spinors $R=a_1+a_2e_1e_2=: a_1+a_2I=e_1\alpha_i$ via multiplication e.g. with $e_1$, which is a bijection. Thus, for instance, for the choice of simple roots $\alpha_1=e_1$ and $\alpha_2=-\cos{\frac{\pi}{n}}e_1+\sin{\frac{\pi}{n}}e_2$ one sees that taking the spinors of this root system is essentially multiplication by $e_1$ and therefore the whole root system $I_2(n)$ gets dualised to itself. 
	The space of spinors has a natural Euclidean structure given by $R\tilde{R}=a_1^2+a_2^2$, i.e. a two-dimensional Euclidean vector space. This induces a rank-2 root system from any rank-2 root system in a similar way to the 3D-to-4D construction above, but is rather less interesting, as it does not yield any new root systems but just maps $I_2(n)$ to $I_2(n)$. However, as a complex/quaternionic group, $I_2(n)$ is precisely the cyclic group of order $2n$, $C_{2n}$, which via the McKay correspondence is linked to the missing $A_n$ family. This inclusion of the self-dual 2D root systems $I_2(n)$ therefore completes the $ADE$ correspondence 
	$$\begin{array}{c}
	(I_2(n), A_1\times I_2(n), A_3, B_3, H_3)\rightarrow (I_2(n), I_2(n)\times I_2(n), D_4, F_4, H_4),\\
	\\
	\text{ and } (I_2(n), I_2(n)\times I_2(n), D_4, F_4, H_4)
	\rightarrow (A_n, D_n, E_6, E_7, E_8).\end{array}$$
	\begin{table}

	% Follow this input for your own table layout
	%
	\begin{tabular}{|c|c||c|c|c||c|c|}
	\hline
	2D/3D&$|\Phi|$&4D &$G$ &$\sum d_i$ & $ADE$\textcolor{white}{\Huge{$H_o$}}&$h$ 	\tabularnewline 	\hline \hline
	$I_2(n)$ &$2n$&$I_2(n)$ &$C_{2n}$& $2n$ & $\tilde{A}_{2n-1}$ \textcolor{white}{\Huge{$H_o$}}& $2n$	\tabularnewline 	\hline 
	$A_1\times I_2(n)$ &$2n+2$&$I_2(n)\times I_2(n)$ &$\Dic_n$& $2n+2$ & $\tilde{D}_{n+2}$\textcolor{white}{\Huge{$H_o$}} & $2(n+1)$	\tabularnewline 	\hline 
	$A_3$ &$12$&$D_4$ &$2T$& $12$ & $\tilde{E}_{6}$ \textcolor{white}{\Huge{$H_o$}}& $12$	\tabularnewline 	\hline 
	$B_3$ &$18$&$F_4$ &$2O$& $18$ & $\tilde{E}_{7}$ \textcolor{white}{\Huge{$H_o$}}& $18$	\tabularnewline 	\hline 
	$H_3$ &$30$&$H_4$ &$2I$& $30$ & $\tilde{E}_{8}$ \textcolor{white}{\Huge{$H_o$}}& $30$	\tabularnewline 	\hline 
	\end{tabular}

	\caption{The correspondence of Clifford spinor induced root systems in 2D/3D and 4D.  The 4D root systems are binary polyhedral groups related to the $ADE$-type affine Lie algebras via the McKay correspondence. This thus extends to a correspondence between 2D/3D root systems and  $ADE$-type root systems, where the Coxeter number $h$ of the $ADE$ Lie algebras and the sum of the dimensions of the irreducible representations of the binary polyhedral group $G$,  $\sum d_i$, are in fact given by the number of roots  $|\Phi|$ in the 2D/3D root systems.}
	\label{tab:4}       % Give a unique label
	\end{table}

The remarkable fact is now that our earlier observation that the number of roots  $(12, 18, 30)$ in $(A_3, B_3, H_3)$ matches the Coxeter number of $(E_6, E_7, E_8)$ actually extends to the whole $ADE$ correspondence: the number of roots of $I_2(n)$, $2n$, exactly matches the Coxeter number of  $\tilde{A}_{2n-1}$ that $C_{2n}$ corresponds to; likewise, the number of roots $2n+2$ in $A_1\times I_2(n)$ matches the Coxeter number $2(n+1)$ of the $\tilde{D}_{n+2}$ family that the dicyclic group $\Dic_n$ is in correspondence with. These results are summarised in Table \ref{tab:4}. The spinor-induced root systems $(I_2(n), A_1\times I_2(n), A_3, B_3, H_3)$ are therefore in correspondence with the $ADE$ Lie algebras via the intermediate polyhedral groups and the McKay correspondence, with $(2n, 2n+2, 12, 18, 30)$  simultaneously being the number of roots in the Platonic root systems, the sum of the dimensions of irreducible representations of the binary polyhedral groups and the ADE Coxeter numbers. In fact, $(I_2(12), I_2(18), I_2(30))$ were just found to be the exceptions within the $I_2(n)$ family in a very different context \cite{Kildetoft2018simple}. 

The next question therefore is whether there is a more direct correspondence between $(I_2(n), A_1\times I_2(n), A_3, B_3, H_3)$ and $ADE$ without involving the intermediate step. Furthermore, not all Lie algebras of $A$-type are actually included via this intermediate step. This is because root systems are always even, whilst the odd order cyclic groups also correspond to $A$-type Lie algebras which are therefore not covered. Including the 2D root systems seemed somewhat arbitrary anyway, so one could argue that one could also consider the cyclic groups of order $n$,  $C_n$, since they are subgroups and implicitly contained. However, that would lose our root system based reasoning which proved fruitful with the 4D Clifford spinor induction. We therefore explore a direct correspondence between $(I_2(n), A_1\times I_2(n), A_3, B_3, H_3)$ and $ADE$ in the next section, circumventing the intermediate step via 4D.

\section{A Trinity of correspondences: 2D/3D root systems, spinor induced root systems and  $ADE$}\label{sec_ADE}

The last two sections made connections between the 2D and 3D root systems and their induced 4D root systems; and via considering them as binary polyhedral groups also with $ADE$ affine Lie algebras via the McKay correspondence. In this section we discuss a direct connection between our collection of root systems $(I_2(n), A_1\times I_2(n), A_3, B_3, H_3)$ and the $ADE$ Lie algebras, again extending the connection between two Trinities.

The Trinities in question are of course $(A_3, B_3, H_3)$ and $(E_6, E_7, E_8)$. Their number of roots and Coxeter numbers match, as discussed above, but there is another more immediate, if mysterious, connection: the Platonic solids $(A_3, B_3, H_3)$ have corresponding characteristic triples of orders of rotations $((2,3,3), (2,3,4), (2,3,5))$ generated by pairs of generators of the Coxeter groups, or via the angles between the simple roots (as fractions of $\pi$). For instance  the icosahedral group has 2-, 3- and 5-fold rotations. The $E$-type diagrams also encode triples, albeit in a less obvious way: all three diagrams can be considered as consisting of three legs starting from a central node. The number of nodes in each leg also gives a number such that the three legs together give a triple. Thus, the $E_8$ diagram leads to the triple $(2,3,5)$; likewise $E_6$ gives $(2,3,3)$ whilst $E_7$ gives $(2,3,4)$ such that both Trinities give rise to the same set of triples. This connection seems very puzzling but is well-known (see, e.g. \cite{he2015sporadic}) and intuitive enough. It would be interesting to find a construction that makes this elusive link explicit, akin to the simple connection between 3D and 4D root systems. The interesting question is now whether this observation also extends to our full correspondence between $(I_2(n), A_1\times I_2(n), A_3, B_3, H_3)$ and the $ADE$ Lie algebras. Which diagrams would the other root systems, i.e. the two countable families, correspond to?

The product of the two simple roots in $I_2(n)$ simply gives  rise to an $n$-fold rotation. This corresponds to a single leg with $n$ nodes in the diagram as a simple string and nothing else. This is exactly the $A_n$ diagram. This time, all $A_n$ are achieved from the $I_2(n)$, in contrast to the initial connection with the McKay correspondence above. 
The root systems $A_1\times I_2(n)$ encode a triple of rotation orders: the two simple roots of $I_2(n)$ still give a string of length $n$, but now the simple root of $A_1$ with either of the two simple roots of $I_2(n)$ will give rise to a 2-fold rotation, as they are orthogonal. The triple is therefore  $(2,2,n)$, which gives a diagram with one leg of length $n$ which meets two legs of length $2$, i.e. it is of $D$-type. It is in fact $D_{n+2}$, which was to be expected from the connection with the McKay correspondence. This correspondence between rotation orders and lengths of legs in Dynkin diagrams therefore extends to the full set of root systems $(I_2(n), A_1\times I_2(n), A_3, B_3, H_3)$ that we have established in this paper and the $ADE$ Lie algebras. A tangible construction connecting the two explicitly akin to the above construction in terms of Clifford spinors would be desirable. Our recent construction of the $E_8$ root system from the $H_3$ root system in a related Clifford construction  \cite{Dechant2016Birth} perhaps hints that this is again a correspondence between root systems, rather than operating at the level of the  Lie algebras -- after all, the Lie algebras and groups are defined by their respective root systems (up to integrability conditions which are met by crystallographic root systems). Table \ref{tab:3} summarises this correspondence. 

\begin{table}

\label{tab_5}       % Give a unique label
%
% Follow this input for your own table layout
%
\begin{tabular}{|c|c|c||c|c|c|}
\hline
2D/3D& &rot &$ADE$ & & legs  	\tabularnewline 	\hline \hline
$I_2(n)$ &\begin{tikzpicture}[scale=0.4,
knoten/.style={        circle,      inner sep=.10cm,        draw}
]
%\node at (-1,.7) (knoten0) [knoten,   color=white!0!black, ball color=white ] {};
\node at  (1,.7) (knoten1) [knoten,   color=white!0!black, ball color=white ] {};
\node at  (3,.7) (knoten2) [knoten,   color=white!0!black, ball color=white ] {};

\node at  (-1,0) (alpha0)  {};%{$\alpha_0$};
\node at  (1,0)  (alpha1) {};%{$\alpha_1$};
\node at  (2,1.2)  (alpha2) {{$n$}};%{$\alpha_2$};

%\path  (knoten0) edge (knoten1) ;
\path  (knoten1) edge (knoten2);
 
\end{tikzpicture}&$n$ &$A_n$& \begin{tikzpicture}[scale=0.4,
knoten/.style={        circle,      inner sep=.10cm,        draw}
]
%\node at  (1,.7) (knoten1) [knoten,   color=white!0!black, ball color=white ] {};
%\node at  (3,.7) (knoten2) [knoten,   color=white!0!black, ball color=white ] {};
\node at  (5,.7) (knoten3) [knoten,   color=white!0!black, ball color=white ] {};
\node at  (7,.7) (knoten4) [knoten,   color=white!0!black, ball color=white ] {};
%\node at  (9,.7) (knoten6) [knoten,   color=white!0!black, ball color=white ] {};
\node at (9,.7) (knoten6) {$\dots$};

\node at  (11,.7) (knoten7) [knoten,   color=white!0!black, ball color=white ] {};
\node at  (13,.7) (knoten8) [knoten,   color=white!0!black, ball color=white ] {};
%\node at  (9,2.7) (knoten9) [knoten,   color=white!0!black, ball color=white ] {};
%\node at (3,4.7) (knoten6) [knoten,  color=white!0!black] {};

\node at (8,.7) (kknoten4) {};
\node at (10,.7) (kknoten7) {};

%\path  (knoten1) edge (knoten2);
%\path  (knoten2) edge (knoten3);
\path  (knoten3) edge (knoten4);
%\path  (knoten2) edge (knoten5);
\path  (knoten4) edge (kknoten4);
%\path  (knoten6) edge (knoten9);
\path  (kknoten7) edge (knoten7);
\path  (knoten7) edge (knoten8);

\end{tikzpicture} & $n$	\tabularnewline 	\hline 
$A_1\times I_2(n)$ &\begin{tikzpicture}[scale=0.4,
knoten/.style={        circle,      inner sep=.10cm,        draw}
]
\node at (-1,.7) (knoten0) [knoten,   color=white!0!black, ball color=white ] {};
\node at  (1,.7) (knoten1) [knoten,   color=white!0!black, ball color=white ] {};
\node at  (3,.7) (knoten2) [knoten,   color=white!0!black, ball color=white ] {};

\node at  (-1,0) (alpha0)  {};%{$\alpha_0$};
\node at  (1,0)  (alpha1) {};%{$\alpha_1$};
\node at  (2,1.2)  (alpha2) {{$n$}};%{$\alpha_2$};

%\path  (knoten0) edge (knoten1) ;
\path  (knoten1) edge (knoten2);
 
\end{tikzpicture} &$2,2,n$& ${D}_{n+2}$ & \begin{tikzpicture}[scale=0.4,
knoten/.style={        circle,      inner sep=.10cm,        draw}
]
%\node at  (1,.7) (knoten1) [knoten,   color=white!0!black, ball color=white ] {};
%\node at  (3,.7) (knoten2) [knoten,   color=white!0!black, ball color=white ] {};
\node at  (5.3,-0.3) (knoten3a) [knoten,   color=white!0!black, ball color=white ] {};
\node at  (5.3,1.7) (knoten3b) [knoten,   color=white!0!black, ball color=white ] {};
\node at  (7,.7) (knoten4) [knoten,   color=white!0!black, ball color=white ] {};
%\node at  (9,.7) (knoten6) [knoten,   color=white!0!black, ball color=white ] {};
\node at (9,.7) (knoten6) {$\dots$};

\node at  (11,.7) (knoten7) [knoten,   color=white!0!black, ball color=white ] {};
\node at  (13,.7) (knoten8) [knoten,   color=white!0!black, ball color=white ] {};
%\node at  (9,2.7) (knoten9) [knoten,   color=white!0!black, ball color=white ] {};
%\node at (3,4.7) (knoten6) [knoten,  color=white!0!black] {};

\node at (8,.7) (kknoten4) {};
\node at (10,.7) (kknoten7) {};

%\path  (knoten1) edge (knoten2);
%\path  (knoten2) edge (knoten3);
\path  (knoten3a) edge (knoten4);
\path  (knoten3b) edge (knoten4);
%\path  (knoten2) edge (knoten5);
\path  (knoten4) edge (kknoten4);
%\path  (knoten6) edge (knoten9);
\path  (kknoten7) edge (knoten7);
\path  (knoten7) edge (knoten8);

\end{tikzpicture}& $2,2,n$	\tabularnewline 	\hline 
$A_3$ &\begin{tikzpicture}[scale=0.4,
knoten/.style={        circle,      inner sep=.10cm,        draw}
]
\node at (-1,.7) (knoten0) [knoten,   color=white!0!black, ball color=white ] {};
\node at  (1,.7) (knoten1) [knoten,   color=white!0!black, ball color=white ] {};
\node at  (3,.7) (knoten2) [knoten,   color=white!0!black, ball color=white ] {};

\node at  (-1,0) (alpha0)  {};%{$\alpha_0$};
\node at  (1,0)  (alpha1) {};%{$\alpha_1$};
%\node at  (2,1)  (alpha2) {{$n$}};%{$\alpha_2$};

\path  (knoten0) edge (knoten1) ;
\path  (knoten1) edge (knoten2);
 
\end{tikzpicture}&$2,3,3$ & ${E}_{6}$ & \begin{tikzpicture}[scale=0.4,
knoten/.style={        circle,      inner sep=.10cm,        draw}
]
%\node at  (1,.7) (knoten1) [knoten,   color=white!0!black, ball color=white ] {};
%\node at  (3,.7) (knoten2) [knoten,   color=white!0!black, ball color=white ] {};
\node at  (5,.7) (knoten3) [knoten,   color=white!0!black, ball color=white ] {};
\node at  (7,.7) (knoten4) [knoten,   color=white!0!black, ball color=white ] {};
\node at  (9,.7) (knoten6) [knoten,   color=white!0!black, ball color=white ] {};
\node at  (11,.7) (knoten7) [knoten,   color=white!0!black, ball color=white ] {};
\node at  (13,.7) (knoten8) [knoten,   color=white!0!black, ball color=white ] {};
\node at  (9,2.7) (knoten9) [knoten,   color=white!0!black, ball color=white ] {};
%\node at (3,4.7) (knoten6) [knoten,  color=white!0!black] {};

%\path  (knoten1) edge (knoten2);
%\path  (knoten2) edge (knoten3);
\path  (knoten3) edge (knoten4);
%\path  (knoten2) edge (knoten5);
\path  (knoten4) edge (knoten6);
\path  (knoten6) edge (knoten9);
\path  (knoten6) edge (knoten7);
\path  (knoten7) edge (knoten8);

\end{tikzpicture}&$2,3,3$	\tabularnewline 	\hline 
$B_3$ &\begin{tikzpicture}[scale=0.4,
knoten/.style={        circle,      inner sep=.10cm,        draw}
]
\node at (-1,.7) (knoten0) [knoten,   color=white!0!black, ball color=white ] {};
\node at  (1,.7) (knoten1) [knoten,   color=white!0!black, ball color=white ] {};
\node at  (3,.7) (knoten2) [knoten,   color=white!0!black, ball color=white ] {};

\node at  (-1,0) (alpha0)  {};%{$\alpha_0$};
\node at  (1,0)  (alpha1) {};%{$\alpha_1$};
\node at  (2,1.2)  (alpha2) {{$4$}};%{$\alpha_2$};

\path  (knoten0) edge (knoten1) ;
\path  (knoten1) edge (knoten2);
 
\end{tikzpicture}&$2,3,4$ &${E}_{7}$ & \begin{tikzpicture}[scale=0.4,
knoten/.style={        circle,      inner sep=.10cm,        draw}
]
%\node at  (1,.7) (knoten1) [knoten,   color=white!0!black, ball color=white ] {};
\node at  (3,.7) (knoten2) [knoten,   color=white!0!black, ball color=white ] {};
\node at  (5,.7) (knoten3) [knoten,   color=white!0!black, ball color=white ] {};
\node at  (7,.7) (knoten4) [knoten,   color=white!0!black, ball color=white ] {};
\node at  (9,.7) (knoten6) [knoten,   color=white!0!black, ball color=white ] {};
\node at  (11,.7) (knoten7) [knoten,   color=white!0!black, ball color=white ] {};
\node at  (13,.7) (knoten8) [knoten,   color=white!0!black, ball color=white ] {};
\node at  (9,2.7) (knoten9) [knoten,   color=white!0!black, ball color=white ] {};
%\node at (3,4.7) (knoten6) [knoten,  color=white!0!black] {};

%\path  (knoten1) edge (knoten2);
\path  (knoten2) edge (knoten3);
\path  (knoten3) edge (knoten4);
%\path  (knoten2) edge (knoten5);
\path  (knoten4) edge (knoten6);
\path  (knoten6) edge (knoten9);
\path  (knoten6) edge (knoten7);
\path  (knoten7) edge (knoten8);

\end{tikzpicture}&  $2,3,4$	\tabularnewline 	\hline 
$H_3$ &\begin{tikzpicture}[scale=0.4,
knoten/.style={        circle,      inner sep=.10cm,        draw}
]
\node at (-1,.7) (knoten0) [knoten,   color=white!0!black, ball color=white ] {};
\node at  (1,.7) (knoten1) [knoten,   color=white!0!black, ball color=white ] {};
\node at  (3,.7) (knoten2) [knoten,   color=white!0!black, ball color=white ] {};

\node at  (-1,0) (alpha0)  {};%{$\alpha_0$};
\node at  (1,0)  (alpha1) {};%{$\alpha_1$};
\node at  (2,1.2)  (alpha2) {{$5$}};%{$\alpha_2$};

\path  (knoten0) edge (knoten1) ;
\path  (knoten1) edge (knoten2);
 
\end{tikzpicture}&$2,3,5$ &${E}_{8}$ & \begin{tikzpicture}[scale=0.4,
knoten/.style={        circle,      inner sep=.10cm,        draw}
]
\node at  (1,.7) (knoten1) [knoten,   color=white!0!black, ball color=white ] {};
\node at  (3,.7) (knoten2) [knoten,   color=white!0!black, ball color=white ] {};
\node at  (5,.7) (knoten3) [knoten,   color=white!0!black, ball color=white ] {};
\node at  (7,.7) (knoten4) [knoten,   color=white!0!black, ball color=white ] {};
\node at  (9,.7) (knoten6) [knoten,   color=white!0!black, ball color=white ] {};
\node at  (11,.7) (knoten7) [knoten,   color=white!0!black, ball color=white ] {};
\node at  (13,.7) (knoten8) [knoten,   color=white!0!black, ball color=white ] {};
\node at  (9,2.7) (knoten9) [knoten,   color=white!0!black, ball color=white ] {};
%\node at (3,4.7) (knoten6) [knoten,  color=white!0!black] {};

\path  (knoten1) edge (knoten2);
\path  (knoten2) edge (knoten3);
\path  (knoten3) edge (knoten4);
%\path  (knoten2) edge (knoten5);
\path  (knoten4) edge (knoten6);
\path  (knoten6) edge (knoten9);
\path  (knoten6) edge (knoten7);
\path  (knoten7) edge (knoten8);

\end{tikzpicture}&  $2,3,5$	\tabularnewline 	\hline 
\end{tabular}
\caption{Extending the Trinity/root system correspondence to a correspondence between  2D/3D and $ADE$ root systems, omitting the intermediate step via 4D root systems/binary polyhedral groups: the 2D/3D root systems generate rotations of orders given by the angle between simple roots. The rotation orders from the 2D/3D root systems are in a one-to-one correspondence with the lengths of the legs in the $ADE$ Coxeter-Dynkin diagrams. }

\label{tab:3} 
\end{table}

Thus, there are three interrelated classes of objects: the 2D/3D root systems, the 2D/4D induced root systems, and the $ADE$-type root systems. The McKay correspondence was a rather mysterious correspondence between the latter two, whilst our Clifford construction straightforwardly related the first two; the connection between the first and the last is intuitively clear, if not explicitly. Thus, the intermediate step via the McKay correspondence was probably not  the most natural way to think about these interrelations. It appears that rather, the three classes of objects are related more akin to the $D_4$ diagram in Fig. \ref{figE8CoxPl}, where all three classes of objects have relations to each other, and perhaps to a central, not yet identified, concept. In fact, our two new correspondences seem very straightforward; perhaps the reason the McKay correspondence seemed so remarkable was because in a sense it was the least natural out of the connections between the three classes of objects.

\section{Conclusions}\label{sec_concl}
In this paper, we have investigated whether Arnold's original observation linking the two Trinities $(A_3, B_3, H_3)$ and $(D_4, F_4, H_4)$ via Weyl chamber decomposition and exponents in fact extends to  the additional cases included in our recent correspondence between 3D and 4D root systems $(A_1\times I_2(n), A_3, B_3, H_3)$ and $(I_2(n)\times I_2(n), D_4, F_4, H_4)$. We discussed the different cases of the 4D Coxeter plane in a Clifford algebra framework and showed how the correct exponents arise from the factorisation of the Coxeter element. Arnold's link thus indeed extends to this new correspondence. An extension of the general root system construction might connect the Weyl chamber decomposition to the 4D exponents in generality. This proof will be the subject of future work.  

Since our Clifford construction gives rise to the (binary) polyhedral groups, which are in McKay correspondence to the $ADE$ affine Lie algebras, we discussed whether the self-dual 2D root systems should be included in this correspondence, enlarging it to $(I_2(n), A_1\times I_2(n), A_3, B_3, H_3)$ and $(I_2(n), I_2(n)\times I_2(n), D_4, F_4, H_4)$. This connects these root systems with $ADE$ affine Lie algebras via the McKay correspondence as an intermediate. Throughout this new correspondence, we have the -- to our knowledge -- new observation that the number of roots in 2D/3D matches the Coxeter number of the corresponding $ADE$ Lie algebra, as well as the sum of the dimensions of the irreducible representations of the intermediate binary polyhedral group. Again there should be a general proof of why this should be the case, which will be the focus of further work. 

We finally discussed a direct connection between the 2D/3D root systems and the $ADE$ Lie algebras (not affine) via rotation orders and lengths of legs in Dynkin diagrams. The fact that we had come up with the set of 2D/3D root systems $(I_2(n), A_1\times I_2(n), A_3, B_3, H_3)$  for entirely different reasons makes the corresponding Dynkin diagrams a prediction. Again the link between two Trinities $(A_3, B_3, H_3)$ and $(E_6, E_7, E_8)$ was extended to the full correspondence; the additional families $I_2(n)$ and $A_1\times I_2(n)$ in fact exactly encode the (missing) $A_n$ and $D_n$ diagrams, which thus extends the correspondence to a full $ADE$ correspondence. This in many ways is more natural than the original McKay correspondence, though again a concrete construction linking the two is missing. However, from our earlier work, which is at the level of root systems, it appears that there is nothing that requires the correspondence to be at the level of the Lie algebra or group. Rather, it seems much more natural that this is again another correspondence between root systems: 2D/3D, 2D/4D and $ADE$-type root systems. We have explicitly constructed the $E_8$ root system from the $H_3$ root system in earlier work \cite{Dechant2016Birth}, and it seems likely that a related construction will apply to the full correspondence, making the link more explicit than operating at the level of the diagrams (akin to the original McKay correspondence). The McKay correspondence appears to be determined already by properties of the $ADE$ root systems such as Coxeter number and diagrams -- no properties of the Lie groups and algebras are needed as such. There might thus be a conceptual unification at the level of root systems -- for which we have argued Clifford algebras are a natural framework.

\subsection*{Ethics statement}
This work did not use personal data, nor does the research have ethical implications. The research was performed with the highest standards of academic integrity. 
\subsection*{Data accessibility}
This work does not have any experimental data.
\subsection*{Competing interests}
The author has no conflict of interest to declare.
\subsection*{Author's contributions}
The single author PPD conceived the project, performed all the calculations and wrote the paper. 
\subsection*{Acknowledgements}\label{ack}
This paper is dedicated to the memory of the late Lady Isabel Butterfield and Lord John Butterfield. I am grateful for their and  their son Jeremy Butterfield's support and friendship over the last 15 years, beginning with the Downing Lord Butterfield Harvard Award and henceforth. I would also like to thank Yang-Hui He, Peter Cameron, John McKay, Alastair King, Rob Wilson, Terry Gannon, John Baez, Jim Humphreys, Reidun Twarock, Anthony Lasenby, Joan Lasenby, David Hestenes and Eckhard Hitzer.
\subsection*{Funding statement}
This work was not funded by any specific grant.

\section*{Appendix}

This appendix gives the exact results for the $H_4$ calculation given numerically in Section \ref{H4}.

 The Cartan matrix for $H_4$ is
$$\left( \begin{array}{cccc}
2 & -1 & 0 & 0\\
-1 & 2 & -1 & 0\\
0 & -1 & 2 & -\tau\\
0 & 0 & -\tau & 2\end{array} \right),$$
whilst the inverse basis for the given choice of simple roots is $\omega_1=2\tau e_4$, $\omega_2=-\tau e_1+e_2+(3\tau+1)e_4$, $\omega_3 =-2\tau e_1+(4\tau+2)e_4$ and $\omega_4=-(1+\tau)e_1+e_3+(3\tau+2)e_4$.

The Perron-Frobenius eigenvector of this Cartan matrix is proportional to $$\left[ \begin {array}{c} 4+4\,\sqrt {5}\\ \noalign{\medskip}2\,\sqrt 
{7+\sqrt {30+6\,\sqrt {5}}+\sqrt {5}} \left( 1+\sqrt {5} \right) 
\\ \noalign{\medskip}\sqrt {30+6\,\sqrt {5}}\sqrt {5}+4\,\sqrt {5}+
\sqrt {30+6\,\sqrt {5}}+8\\ \noalign{\medskip} \left( \sqrt {5}+\sqrt 
{30+6\,\sqrt {5}}-1 \right) \sqrt {7+\sqrt {30+6\,\sqrt {5}}+\sqrt {5}
}\end {array} \right].
$$

%This yields coloured vectors proportional to 
%$$\left(  \left( -2\,\sqrt {5}-6 \right) \sqrt {30+6\,\sqrt {5}}-12\,\sqrt {5}-28 \right) e_{{1}}$$
%$$+\left(  \left( 6\,\sqrt {5}+14 \right) \sqrt {30+6\,\sqrt {5}}+40\,\sqrt {5}+96 \right)e_4$$
%$%-\sqrt {7+\sqrt {30+6\,\sqrt {5}}+\sqrt {5}}\left(  \left( \sqrt {5}+3 \right) \sqrt {30+6\,\sqrt {5}}+6\,\sqrt {5}+14 \right)e_1$
%$%-\sqrt {7+\sqrt {30+6\,\sqrt {5}}+\sqrt {5}} \left( -4-4\,\sqrt {5}\right)e_2$
%$%-\sqrt {7+\sqrt {30+6\,\sqrt {5}}+\sqrt {5}} \left( -2\,\sqrt {30+6\,\sqrt {5}}-2\,\sqrt {5}+2 \right) e_3$
%$%-\sqrt {7+\sqrt {30+6\,\sqrt {5}}+\sqrt {5}} \left(  \left( -3\,\sqrt {5}-7 \right) \sqrt {30+6\,\sqrt {5}}-20\,\sqrt {5}-48 \right)e_4$

This yields coloured vectors proportional to 
$$\left[ \begin {array}{c}  \left( -2\,\sqrt {5}-6 \right) \sqrt {30+6\,\sqrt {5}}-12\,
\sqrt {5}-28\\ \noalign{\medskip}0
\\ \noalign{\medskip}0\\ \noalign{\medskip} \left( 6\,\sqrt {5}+14 \right) \sqrt {30+6\,\sqrt {5}}+40\,
\sqrt {5}+96\end {array} \right] =:\left[ \begin {array}{c}  a_1\\ \noalign{\medskip}0
\\ \noalign{\medskip}0\\ \noalign{\medskip} a_4\end {array} \right]
$$
and
$$
\left[ \begin {array}{c} \left( \sqrt {5}+
3 \right) \sqrt {30+6\,\sqrt {5}}+6\,\sqrt {5}+14\\ \noalign{\medskip}-4-4\,\sqrt {5} 
\\ \noalign{\medskip}-2\,\sqrt {30+6\,\sqrt {5}}-2\,\sqrt {5}+2\\ \noalign{\medskip}  \left( -3\,\sqrt 
{5}-7 \right) \sqrt {30+6\,\sqrt {5}}-20\,\sqrt {5}-48\end {array} \right]=:\left[ \begin {array}{c} b_1\\ \noalign{\medskip}b_2 
\\ \noalign{\medskip}b_3\\ \noalign{\medskip}  b_4\end {array} \right]
$$
The Coxeter plane bivector is (up to normalisation) the outer product of those two coloured vectors and is hence of the form 
$$ a_1b_2e_1e_2+a_1b_3e_1e_3+(a_1b_4-b_1a_4)e_1e_4-b_2a_4e_2e_4-b_3a_4e_3e_4.$$ The term in brackets actually cancels, whilst the remaining terms are given in terms of the above coefficients. This gives the numerical results stated earlier up to proportionality, which gave the normalised (unit) Coxeter bivector. The bivector discussed here has norm squared $$\left( -1044480\,\sqrt {5}-2334720 \right) \sqrt {30+6\,\sqrt {5}}-
6893568\,\sqrt {5}-15421440.$$ The coefficients are divisible by $12288$, leaving $$\left( -85\,\sqrt {5}-190 \right) \sqrt {30+6\,\sqrt {5}}-
561\,\sqrt {5}-1255.$$ Two of the products simplify slightly: 
$$a_1b_3=\left( 32\,\sqrt {5}+64 \right) \sqrt {30+6\,\sqrt {5}}+224\,\sqrt {5
}+544 \text { and }$$ $$a_4b_3=\left( -96\,\sqrt {5}-224 \right) \sqrt {30+6\,\sqrt {5}}-640\,\sqrt 
{5}-1408.$$ 
% ------------------------------------------------------------------------

%\bibliography{/Users/ppd22/Documents/work_share/virus,/Users/ppd22/Dropbox/YCCSA/virobib,/Users/ppd22/Dropbox/Moonshine/Moonshine}
%\bibliography{/Users/ppd22/Documents/work_share/virus,/Users/ppd22/Dropbox/YCCSA/virobib,/Users/ppd22/Dropbox/Moonshine/Moonshine}

% ------------------------------------------------------------------------
\end{document}